\renewcommand*{\backref}[1]{}
\renewcommand*{\backrefalt}[4]{\ifcase #1 (No citations.)\or
    (Cited on page #2.)\else
    (Cited on pages #2.)\fi
}
    \let\cref\crtcref
\newcommand*{\email}[1]{\href{mailto:#1}{\nolinkurl{#1}} }
\algrenewcommand\algorithmicrequire{\textbf{Input:}}
\algrenewcommand\algorithmicensure{\textbf{Output:}}
\algrenewcommand\alglinenumber[1]{\sf\scriptsize\color{NavyBlue}{#1}}
\renewcommand{\fnum@algorithm}{\textbf{\algorithmname~\thealgorithm.}}
\newcommand{\actionbox}[1]{\begin{tcolorbox}[colback=white,colframe=black,width=\columnwidth,boxsep=5pt,arc=4pt]
    \emph{#1}
\end{tcolorbox}}
\newcommand{\pad}{\vspace{1em}\hrule \vspace{1em}}
\newcommand{\unline}[1]{\uline{\phantom{#1}}\llap{\contour{white}{#1}}}
\def\hlinewd#1{\noalign{\ifnum0=`}\fi\hrule \@height #1 \futurelet
	\reserved@a\@xhline}
\newtheorem{theorem}{Theorem}
\newtheorem{lemma}{Lemma}
\newtheorem{fact}{Fact}
\newtheorem{definition}{Definition}
\newtheorem{problem}{Problem}
\newtheorem{importedtheorem}{Imported Theorem}
\Crefname{setting}{Setting}{Settings}
\Crefname{problem}{Problem}{Problems}
\Crefname{importedtheorem}{Imported Theorem}{Imported Theorems}
\newtheorem*{rep@theorem}{\rep@title}
\newcommand{\newreptheorem}[2]{\newenvironment{rep#1}[1]{\def\rep@title{\Cref{##1} Restated}\begin{rep@theorem}}{\end{rep@theorem}}}
\newtheorem*{rep@lemma}{\rep@title}
\newcommand{\newreplemma}[2]{\newenvironment{rep#1}[1]{\def\rep@title{\Cref{##1} Restated}\begin{rep@lemma}}{\end{rep@lemma}}}
\newcommand{\defeq}[0]{\ensuremath{\;{\vcentcolon=}\;}\xspace}
\let\norm\relax
\newcommand{\norm}[1]{\enVert[0]{#1}}
\DeclareMathOperator*{\argmin}{arg\,min}
\DeclareMathOperator*{\E}{\mathbb{E}}
\newcommand{\etal}{\text{et al.}\xspace}
\newcommand{\eps}[0]{\ensuremath{\varepsilon}}
\let\epsilon\eps
 \newcommand{\mat}[1]{\mathbf{#1}} \renewcommand{\vec}[1]{\boldsymbol{\mathrm{#1}}}   
 \newcommand{\bmat}[1]{\begin{bmatrix} #1 \end{bmatrix}}
\newcommand{\mA}{\ensuremath{\mat{A}}\xspace}
\newcommand{\mI}{\ensuremath{\mathbf{I}}\xspace}
\newcommand{\mS}{\ensuremath{\mat{S}}\xspace}
\newcommand{\mV}{\ensuremath{\mat{V}}\xspace}
\newcommand{\mX}{\ensuremath{\mat{X}}\xspace}
\newcommand{\mY}{\ensuremath{\mat{Y}}\xspace}
\newcommand{\va}{\ensuremath{\vec{a}}\xspace}
\newcommand{\vb}{\ensuremath{\vec{b}}\xspace}
\newcommand{\vx}{\ensuremath{\vec{x}}\xspace}
\newcommand{\cC}{\ensuremath{{\mathcal C}}\xspace}
\newcommand{\cL}{\ensuremath{{\mathcal L}}\xspace}
\newcommand{\cN}{\ensuremath{{\mathcal N}}\xspace}
\newcommand{\cO}{\ensuremath{{\mathcal O}}\xspace}
\newcommand{\cS}{\ensuremath{{\mathcal S}}\xspace}
\newcommand{\bbC}{\ensuremath{{\mathbb C}}\xspace}
\newcommand{\bbR}{\ensuremath{{\mathbb R}}\xspace}
\newcommand{\bbZ}{\ensuremath{{\mathbb Z}}\xspace}
\newcommand{\rP}{\ensuremath{\mathrm{P}}\xspace}
\begin{document}
\author{Chris Camaño\thanks{Department of Computing and Mathematical Sciences, California Institute of Technology, Pasadena, CA 91125 USA (\email{ccamano@caltech.edu}, \email{ram900@caltech.edu},  \email{kshu@caltech.edu})}\and Raphael A. Meyer\footnotemark[1] \and Kevin Shu\footnotemark[1]}

\title{Debiasing Polynomial and Fourier Regression}
\maketitle

\begin{abstract}
We study the problem of approximating an unknown function $f:\mathbb{R}\to\mathbb{R}$ by a degree-$d$ polynomial using as few function evaluations as possible, where error is measured with respect to a probability distribution $\mu$.
Existing randomized algorithms achieve near-optimal sample complexities to recover a $ (1+\varepsilon) $-optimal polynomial but produce \textit{biased} estimates of the best polynomial approximation, which is undesirable.

We propose a simple debiasing method based on a connection between polynomial regression and random matrix theory.
Our method involves evaluating $f(\lambda_1),\ldots,f(\lambda_{d+1})$ where $\lambda_1,\ldots,\lambda_{d+1}$ are the eigenvalues of a suitably designed random complex matrix tailored to the distribution $\mu$.
Our estimator is unbiased, has near-optimal sample complexity, and experimentally outperforms iid leverage score sampling.

Additionally, our techniques enable us to debias existing methods for approximating a periodic function with a truncated Fourier series with near-optimal sample complexity.
\end{abstract}

\section{Introduction}
\label{sec:intro}
We study active polynomial regression.
Let $\mu$ be a given distribution on $\mathbb{R}$, and suppose we are given oracle access to an unknown function $f: \bbR\to\bbR$.
Our goal is, using as few evaluations of \(f\) as possible, to approximately recover the best degree-$d$ polynomial approximation of \(f\) under \(\mu\), defined as

\[
    p^* \defeq \argmin_{\deg(p) \leq d} \E_{t\sim\mu}\bigl[|p(t) - f(t)|^2\bigr].
\]
Typically, \(\mu\) is taken to be the uniform distribution on \([-1,1]\) \cite{meyer2023near}, or the Gaussian distribution \cite{erdelyi2020}.
In particular, we want to recover a polynomial \(\hat p\) of degree-\(d\) that has near-optimal expected error while using as few evaluations of \(f\) as possible:
\begin{problem}
    \label{prob:bayesian}
    Fix a degree \(d\), error tolerance \(\eps > 0\), and probability distribution \(\mu\) on \bbR.
    Let \(f:\bbR\to\bbR\) be a function we have oracle access to.
    Using as few evaluations \(f(t_1),\ldots,f(t_n)\) as possible, recover a polynomial \(\hat p\) such that
    \[
        \E_{t \sim \mu}\bigl[|\hat p(t) - f(t)|^2\bigr]
        \leq (1+\eps) \E_{t \sim \mu}\bigl[|p^*(t) - f(t)|^2\bigr].
    \]
\end{problem}

In \cref{sec:fourier} we also discuss a variant of \cref{prob:bayesian} tailored to identifying the low-degree Fourier coefficients of a periodic function with similar error guarantees.
Prior works have designed randomized algorithms solving \cref{prob:bayesian} with high probability using \(n=\cO(d \log(d) + \frac d\eps)\) or fewer evaluations \cite{chen2019active,shustin2022semi,meyer2023near,adcock2024optimal, kane2017robust, shimizu24}.
These methods share the undesirable trait of being \emph{biased}.
That is, they return a polynomial approximation \(\hat p\) with \(\E[\hat p] \neq p^*\).
Bias is undesirable because it limits which downstream algorithms we can use (e.g.\ model averaging), and because it can introduce unintuitive sources for error.
In this paper, we show that by evaluating \(f\) at the \unline{eigenvalues of a random Hermitian matrix}, we can remove this bias.

\begin{figure}[t]
    \centering
    \includegraphics[width=0.85\linewidth]{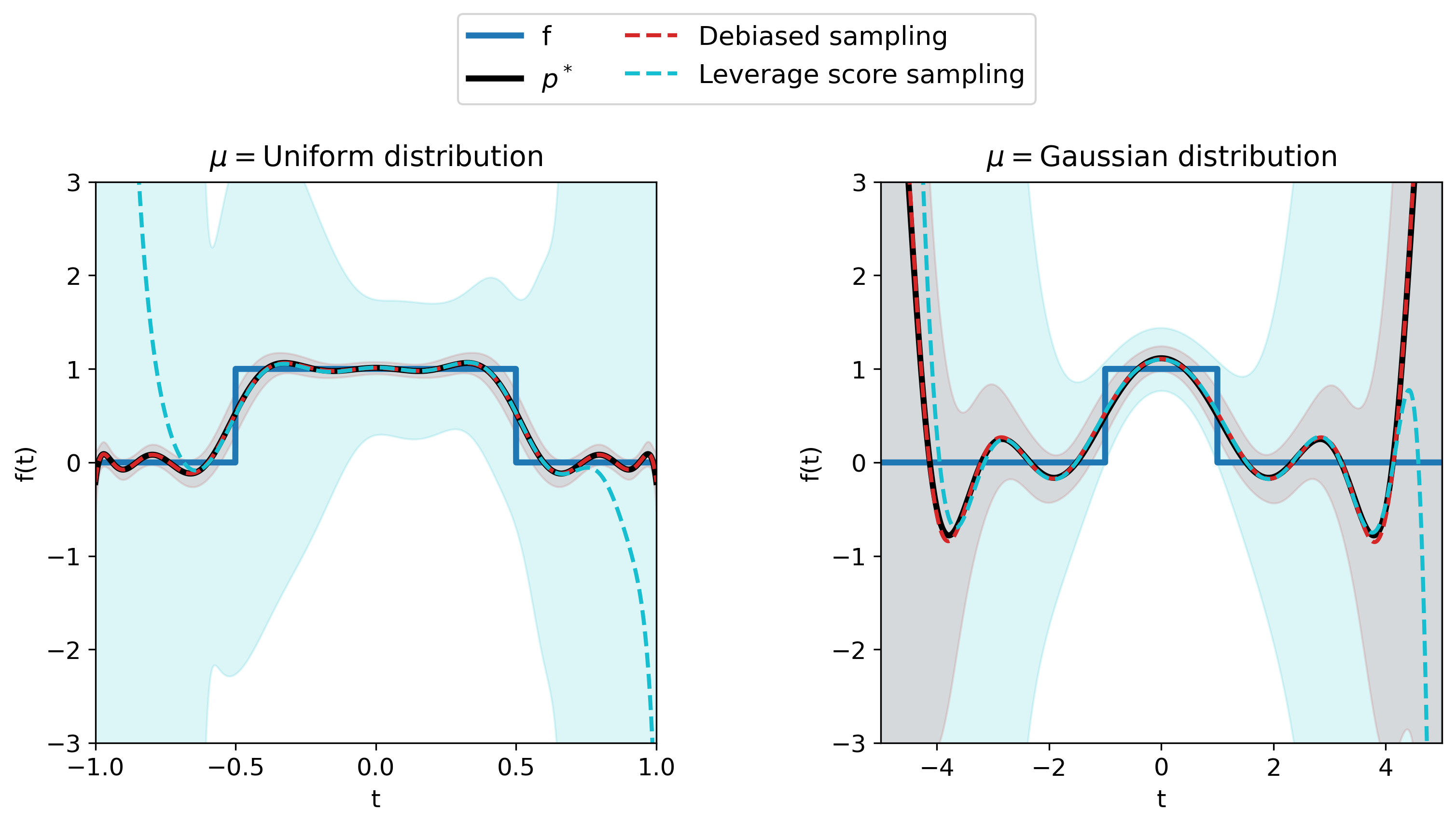}
    \caption{
    \textbf{Polynomial regression without bias.~} We run our debiased method \cref{alg:debiased-generic} (red) and leverage score sampling (blue) to find degree \(d=15\) polynomial approximations of indicator functions under different distributions \(\mu\), using \(n=35\) samples and repeated for 100,000 trials.
    The dotted lines show the empirical mean of returned polynomials across all independent trials, alongside the true best-fit polynomial \(p^*\) (black).
    Our method exactly recovers the optimal polynomial on average while leverage score sampling does not.
    Shaded areas show \(\pm1\) standard deviation bands.
    Our method also has significantly lower variance than leverage score sampling.
    \textit{Left:}
    When \(\mu\) is the uniform distribution on \([-1,1]\), we take \(f(t) = \mathbbm1_{t\in[-0.5,0.5]}\).
    \textit{Right:}
    When \(\mu\) is the Gaussian distribution \(\cN(0,1)\), we take \(f(t) = \mathbbm1_{t\in[-1,1]}\).
    }
    \label{fig:unbiased}
\end{figure}

We demonstrate the effect of our debiasing procedure in \cref{fig:unbiased}, where we compare our method against an existing method (leverage score sampling) which fails to match the best polynomial \(p^*\) on average for the allotted number of samples.
In contrast, our debiased sampling procedure exactly recovers \(p^*\) on average.
Later in \cref{sec:experiments}, we further observe that when \(n\) is small, leverage score sampling consistently produces an approximation of $p^*$ with \emph{orders of magnitude} larger error than our debiased method.
This is important in active learning scenarios like \cref{prob:bayesian}, where sample acquisition is expensive.
For instance, evaluating \(f(t)\) may amount to running a high-fidelity simulation or a real physical experiment.

To express our main result, it is helpful to understand leverage score sampling\footnote{
    Leverage score sampling is typically defined outside of polynomial regression \cite{shustin2022semi,avron2019universal,musco2022active,chen2019active}.
    Various papers connect the general definition and our definition \cite{shustin2022semi,pauwels2018relating,meyer2023near}; see \cite[Sec.~4.1]{meyer2023near} for a didactic proof.
}.
We define \(\rP_0,\ldots,\rP_d\) to be the orthonormal polynomials such that \(\langle\rP_i,\rP_j\rangle_\mu \defeq \E_{t\sim\mu}[\rP_i(t)\rP_j(t)]=\mathbbm1_{[i=j]}\).
In the context of \cref{prob:bayesian}, the \emph{leverage function} for \(\mu\) is \(\tau(t) \defeq \sum_{i=0}^d |\rP_i(t)|^2\) and the \emph{leverage score distribution}
is the distribution on \bbR with pdf \(q_{\rm lev}(t) = \frac1{d+1} \tau(t)\mu(t)\).
Here and throughout, we overload notation to let \(\mu\) denote both a distribution and its pdf.
Our main result is that a simple alteration of leverage score sampling solves \cref{prob:bayesian} in expectation for any continuous distribution \(\mu\), while being unbiased:

\begin{theorem}
    \label{thm:debiased-generic}
    Let \(\hat p\) be the result of running \cref{alg:debiased-generic}.
    Then if $n \ge d + 1$, then \(\hat p\) satisfies \(\E[\hat p] = p^*\).
    In addition, taking \(n = \cO(d \log(d) + \frac d\eps)\) samples suffices for $\hat p$ to solve \cref{prob:bayesian} in expectation:
    \[
        \E_{\hat p}\left[\E_{t\sim\mu}\bigl[|\hat p(t)-f(t) |^2\bigr]\right]
        \leq (1+\eps)\E_{t\sim\mu}\bigl[| p^*(t)-f(t)|^2\bigr].
    \]
\end{theorem}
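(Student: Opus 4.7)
The plan is to exploit the classical $\beta=2$ orthogonal polynomial ensemble (OPE): a $(d+1)\times(d+1)$ random Hermitian matrix tailored to $\mu$ has eigenvalues $\lambda_1,\ldots,\lambda_{d+1}$ with joint density
\[
    \frac{1}{(d+1)!}\,\det\!\bigl(V(\vlambda)\bigr)^{2}\,\prod_{i=1}^{d+1}\mu(\lambda_i),
\]
where $V(\vlambda)_{ij}=\rP_{j-1}(\lambda_i)$. Two structural consequences drive everything: the marginal distribution of any single $\lambda_i$ is exactly the leverage score distribution $q_{\rm lev}$, and all higher correlation functions are governed by the Christoffel--Darboux kernel $K_d(s,t)=\sum_{i=0}^{d}\rP_i(s)\rP_i(t)$.

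For the unbiasedness claim I would first treat the base case $n=d+1$, where least-squares fitting reduces to polynomial interpolation at the OPE points. Writing $\hat p=\sum_{j=0}^{d}\hat c_j\rP_j$ gives $\hat c = V(\vlambda)^{-1} f_\vlambda$ with $f_\vlambda = (f(\lambda_1),\ldots,f(\lambda_{d+1}))^{\top}$, and Cramer's rule expresses $\hat c_{j-1} = \det(V^{(j)})/\det(V)$, where $V^{(j)}$ replaces the $j$-th column of $V$ by $f_\vlambda$. When we take expectation against the OPE density, one factor of $\det(V)$ cancels between the $\det(V)^2$ in the density and the denominator in $\hat c_{j-1}$, leaving an integral of $\det(V^{(j)})\,\det(V)$ against $\prod_i \mu(\lambda_i)\,d\lambda_i$. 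Andr\'eief's identity then collapses this to $(d+1)!\,\det(G_j)$, where $G_j$ is the Gram matrix formed from $\{\rP_0,\ldots,\rP_d\}$ against $\{\rP_0,\ldots,\rP_{j-2}, f, \rP_j,\ldots,\rP_d\}$ under $\langle\cdot,\cdot\rangle_\mu$. Orthonormality of $\{\rP_i\}$ makes $G_j$ the identity except for column $j$, whose entries are $\langle f,\rP_{i-1}\rangle_\mu$, and cofactor expansion gives $\det(G_j)=\langle f,\rP_{j-1}\rangle_\mu = c^*_{j-1}$. Hence $\E[\hat c]=c^*$, so $\E[\hat p]=p^*$. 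For $n>d+1$ I would expect the algorithm to either average multiple independent OPE draws (trivially preserving unbiasedness) or augment the OPE interpolation with additional iid leverage-score samples whose contribution to $\E[\hat p]$ is arranged to cancel.

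For the $(1+\eps)$ expected-error bound I would leverage the marginal structure: since each $\lambda_i$ has marginal law $q_{\rm lev}$, the per-sample behaviour coincides with standard iid leverage score sampling, and the established matrix-Chernoff analysis (following \cite{chen2019active,meyer2023near}) carries over. The critical quantity is a subspace embedding for the scaled design matrix $A_{ij}=\rP_{j-1}(\lambda_i)/\sqrt{\tau(\lambda_i)}$: once $A^{\top}A \approx I$ with good probability, a standard normal-equations calculation bounds the fit's mean-squared error by $(1+\eps)$ times the optimum. Matrix Chernoff under the leverage-score marginals supplies the subspace embedding at $n=\Omega(d\log d)$, and an approximate-matrix-multiplication argument contributes the additional $\Omega(d/\eps)$ samples needed to drive the residual below $\eps$ times the optimum. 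The DPP's negative correlation among eigenvalues only strengthens concentration relative to iid sampling, so adapting the existing bounds should be routine.

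The main technical obstacle is reconciling these two competing requirements. Unbiasedness is cleanest in the pure-interpolation regime $n=d+1$, but there is no mechanism there to drive expected error below a fixed floor; any augmentation to $n>d+1$ must preserve \emph{exact} unbiasedness while simultaneously admitting matrix-Chernoff concentration. I therefore expect the proof to define an estimator whose expectation equals $p^*$ by the same Andr\'eief-style cancellation sketched above \emph{and} whose variance is controllable through the marginal leverage-score structure; verifying that these two properties coexist in the specific estimator used by \cref{alg:debiased-generic} is the delicate step.
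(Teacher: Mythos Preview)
Your direct Andr\'eief computation for the interpolation case $n=d+1$ is correct (modulo the harmless row/column transposition in $G_j$), but the paper does not argue this way at all. The paper's proof is a pure reduction: it observes that \cref{alg:debiased-generic} is exactly the \emph{leveraged volume sampling} estimator of Derezi\'nski, Warmuth, and Hsu \cite{derezinski2022unbiased} specialized to the polynomial feature map $\va_t=(\rP_0(t),\ldots,\rP_d(t))^\top$, and then invokes their Theorem~3.1 (stated in the paper as \cref{impthm:leveraged-dpp}) to obtain both $\E[\hat p]=p^*$ for all $n\ge d+1$ \emph{and} the $(1+\eps)$ expected-error bound at $n=\cO(d\log d+d/\eps)$. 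The only work the paper does is \cref{lem:proj-dpps-are-eigs}: showing, via Weyl's integration formula and the generalized Vandermonde determinant, that the eigenvalues of the $\mu$-unitary ensemble have exactly the projection-DPP density $q(\cS)=|\det(\mA_\cS)|^2\prod_i\mu(t_i)$ required by \cite{derezinski2022unbiased}. Everything else is imported.

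The gap in your plan is precisely the step you flagged as ``delicate'': extending unbiasedness from $n=d+1$ to $n>d+1$ when the extra points are iid leverage-score samples and $\hat p$ is the \emph{single} weighted least-squares fit over all $n$ points. Neither of your two guesses (averaging independent OPE blocks, or arranging the extra samples' contribution to ``cancel'') matches what the algorithm actually does, and the correct argument is not a simple extension of the Andr\'eief calculation. Derezi\'nski et al.\ establish it through a coupling that realizes the DPP-plus-iid mixture as a rescaled volume sample drawn from a larger iid pool; this is the substantive content of their paper and is what the present paper cites rather than reproves. Similarly, your sketch for the error bound (matrix Chernoff on the marginals, then appeal to negative association) is plausible in spirit but is again exactly what \cite{derezinski2022unbiased} already packages; the paper does not redo any of that analysis. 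In short: your $n=d+1$ argument is a valid alternative proof of one special case, but the full statement for general $n$ is obtained in the paper by black-boxing \cite{derezinski2022unbiased}, and the missing ingredient in your proposal is that black box.
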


\begin{algorithm}[t]
  \caption{Debiased active polynomial regression} \label{alg:debiased-generic}
  \begin{algorithmic}[1] 
    \Require Oracle access to $f$, target degree $d$, sample complexity \(n \geq d+1\), pdf \(\mu:\bbR\to\bbR\)
    \Ensure Degree-$d$ polynomial $\hat p$
\State Sample Hermitian \(\mX\in\bbC^{d+1 \times d+1}\) with pdf \(q_{\mu}(\mX) \propto \prod_{i=1}^{d+1} \mu(\lambda_i(\mX))\)
    \State Compute eigenvalues $t_1,\ldots,t_{d+1}\in\bbR$ of $\mX$
    \State Sample $t_{d+2},\ldots t_n$ iid with pdf $q_{\rm lev}(t)=\frac{1}{d+1}\mu(t)\tau(t)$
\State Evaluate \(f(t_i)\) for all \(i\in[n]\)
    \State Return polynomial \(\hat p = \argmin_{\deg(p) \leq d} \sum_{i=1}^n \frac1{\tau(t_i)} |p(t_i)-f(t_i)|^2\)
  \end{algorithmic}
\end{algorithm}

This theorem, proven in \cref{sec:thm_2_proof}, shows that we can augment samples from the leverage score distribution with the (real) \emph{eigenvalues of a random Hermitian matrix} and thereby produce an unbiased solution to the polynomial regression problem.
\Cref{thm:debiased-generic} follows from directly combining a result of Derezinksi, Warmuth, and Hsu \cite{derezinski2022unbiased} on solving general linear regression problems with a fact originating from Dyson \cite{dyson1970correlations} relating the eigenvalues of random matrices to determinantal point processes.

As stated, \cref{alg:debiased-generic} leaves several steps underspecified.
Line 5 requires solving a weighted least squares problem on finite data but over all polynomials of degree \(d\).
This can be solved efficiently with a variety of standard numerical methods \cite{trefethen2019approximation,brubeck2021vandermonde}.
Line 3 requires drawing samples from the leverage score distribution, which in principle can be done via rejection sampling as we can evaluate the pdf \(q_{\rm lev}(t)\) efficiently (see, e.g.\ \cite{devroye23}).

Most critically, on line 1, we need to generate a matrix \mX from the \emph{\(\mu\)-unitary distribution}.
That is, we need to sample a random Hermitian \(\mX\in\bbC^{d+1 \times d+1}\) with pdf \(q_\mu(\mX) \propto \prod_{i=1}^{d+1} \lambda_i(\mX)\).
A priori, it is not obvious how to do this.
One potential approach is to use MCMC methods, which can generate such an \mX for many distributions \(\mu\), though their mixing times are difficult to analyze.
However, simpler approaches are possible.
For a variety of distributions \(\mu\), including the Gaussian and uniform distributions, we entirely circumvent the issue of generating \mX and instead \unline{sample the eigenvalues of \mX \emph{without ever constructing \mX}}.

\section{Efficient algorithms for common measures}
\label{sec:tridiag}

In order to activate \cref{alg:debiased-generic}, we need to construct the eigenvalues of a random Hermitian matrix \(\mX\in\bbC^{d+1 \times d+1}\) drawn from the \(\mu\)-unitary ensemble.
For general distributions \(\mu\), it is not obvious how to generate the matrix \mX at all.
Nevertheless, a remarkable series of works on random matrix theory has shown that for many common distributions \(\mu\), we can easily sample a \unline{real symmetric tridiagonal} random matrix
\begin{equation}
    \mY = \bmat{
        \alpha_1 & \beta_1 \\
        \beta_1 & \alpha_2 & \beta_2 \\[0.25em]
        & \beta_2 & \smash\ddots & \smash\ddots \\
        & & \smash\ddots & \alpha_{d} & \beta_d \\
        & & & \beta_{d} & \alpha_{d+1}
    } \in \bbR^{d+1 \times d+1}
    \label{eqn:tridiag}
\end{equation}
with appropriately designed (random) entries \(\alpha_1,\ldots,\alpha_{d+1},\beta_1,\ldots,\beta_{d}\) such that the \emph{eigenvalues of \mY are distributed exactly as the eigenvalues of \mX}.
We say that \mY is sampled from the \emph{tridiagonal matrix model} for \mX.
These models have been described for a wide variety of distributions \(\mu\), and can be constructed in \(\cO(d)\) time.
Further, since we can compute the eigenvalues of a symmetric tridiagonal matrix like \mY in \(\cO(d \log d)\) time \cite{coakley13}, we recover a simple approach to run lines 1 and 2 of \cref{alg:debiased-generic}: construct this random tridiagonal \mY and return its eigenvalues.

To finish activating \cref{alg:debiased-generic}, we also need to draw samples from the leverage score distribution for \(\mu\).
While the aforementioned rejection sampling methods work (e.g.\ Devroye and Hamdan \cite{devroye23} sample from the leverage score distribution for Gaussian \(\mu\) in \(\cO(d^{2/3})\) time), we also provide a black-box approach:
\begin{fact}[Follows from \protect{\cite[Thm.~8]{derezinski2021determinantal}} and \cref{lem:proj-dpps-are-eigs}]
    \label{fact:dpp-to-levs}
    Let \(\mX\in\bbC^{d+1 \times d+1}\) be drawn from the \(\mu\)-unitary ensemble, and let \(i\) be drawn uniformly at random from \(\{1,\ldots,d+1\}\).
    Then \(\lambda_i(\mX)\) has pdf \(q_{\rm lev}(t)\).
\end{fact}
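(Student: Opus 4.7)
The plan is to combine the two cited ingredients and identify the right marginal of a determinantal point process. First, I will apply \cref{lem:proj-dpps-are-eigs}, which should assert that the (unordered) set of eigenvalues of an \(\mX\) drawn from the \(\mu\)-unitary ensemble forms a size-\((d+1)\) \emph{projection determinantal point process} (DPP) on \(\bbR\), with reference measure \(\mu(t)\,dt\) and correlation kernel
\[
    K(s,t) \defeq \sum_{j=0}^{d} \rP_j(s)\rP_j(t).
\]
The classical route to this identification (going back to Dyson) passes through the joint eigenvalue density: integrating out the unitary degree of freedom in \(q_\mu(\mX)\) produces a density proportional to \(\prod_{i<j}(\lambda_i - \lambda_j)^2 \prod_{i=1}^{d+1} \mu(\lambda_i)\), and row-reducing the Vandermonde determinant against the \(\mu\)-orthonormal basis \(\rP_0,\ldots,\rP_d\) rewrites \(\prod_{i<j}(\lambda_i - \lambda_j)^2\) as \(\det[K(\lambda_i,\lambda_j)]_{i,j=1}^{d+1}\) up to a constant. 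This is exactly the joint density of a projection DPP with kernel \(K\).

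Next, I will invoke \cite[Thm.~8]{derezinski2021determinantal} (equivalently, the standard formula for the first correlation function of a DPP) to read off the marginal distribution of a single eigenvalue. Because the joint law is symmetric in the \(\lambda_i\), and because the expected density of points at location \(t\) is \(K(t,t)\mu(t)\) with total mass \(\int K(t,t)\mu(t)\,dt = \tr(K) = d+1\), the density of any individual ordered \(\lambda_i\)---equivalently the density of \(\lambda_i\) for \(i\) uniform in \(\{1,\ldots,d+1\}\)---is \(\tfrac{1}{d+1} K(t,t)\mu(t)\).

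To conclude I just match notation: by definition \(\tau(t) = \sum_{j=0}^d |\rP_j(t)|^2 = K(t,t)\), so the marginal above is \(\tfrac{1}{d+1}\tau(t)\mu(t) = q_{\rm lev}(t)\), as claimed.

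I expect the main obstacle to be bookkeeping rather than substance. Specifically, I need to confirm that the conventions in \cite{derezinski2021determinantal} (correlation kernel versus \(L\)-ensemble, normalization of the \(\rP_j\), whether \(\mu\) is folded into the kernel or kept as a base measure) are consistent with those used in \cref{alg:debiased-generic}, and in particular that \cref{lem:proj-dpps-are-eigs} delivers the projection DPP with the polynomial reproducing kernel \(K\) rather than some rescaled or permuted variant. None of the underlying mathematics is heavy; the real work is aligning these definitions.
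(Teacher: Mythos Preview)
Your proposal is correct and is exactly the argument the paper has in mind: the paper does not spell out a proof of this fact beyond the citation, and your expansion---\cref{lem:proj-dpps-are-eigs} identifies the eigenvalue set as the projection DPP with kernel \(K(s,t)=\sum_{j=0}^d \rP_j(s)\rP_j(t)\), and then \cite[Thm.~8]{derezinski2021determinantal} gives the one-point marginal \(\tfrac{1}{d+1}K(t,t)\mu(t)=q_{\rm lev}(t)\)---is precisely how the two cited results combine. One cosmetic remark: avoid calling the \(\lambda_i\) ``ordered,'' since that usually connotes order statistics (whose individual marginals are \emph{not} \(q_{\rm lev}\)); you mean an arbitrarily labeled eigenvalue from the exchangeable joint law, which is what the uniform-\(i\) phrasing already captures.
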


\begin{figure}
    \begin{algorithm}[H]
      \caption{Fast eigenvalue sampler}
      \label{alg:eigval-sampler}
      \begin{algorithmic}[1]
        \Require On-Diagonal values \(\alpha_1,\cdots,\alpha_k\in\bbR\), off-diagonal values \(\beta_1,\cdots,\beta_{k-1}\in\bbR\)
        \Ensure Eigenvalues $\lambda_1,\ldots,\lambda_k$
        \State Build symmetric tridiagonal \(\mY\in\bbR^{k\times k}\) with diagonal values \(\alpha_i\) and off-diagonal values \(\beta_i\).
        \State Return the eigenvalues $\{\lambda_i\}_{i=1}^k$ of $\mY$ \Comment{\textit{Using a symmetric tridiagonal eigensolver}}
      \end{algorithmic}
    \end{algorithm}
  \vspace{-1em}
    \begin{algorithm}[H]
      \caption{Fast leverage score distribution sampler}
      \label{alg:lev-sampler}
      \begin{algorithmic}[1]
        \Require On-Diagonal values \(\alpha_1,\cdots,\alpha_k\in\bbR\), off-diagonal values \(\beta_1,\cdots,\beta_{k-1}\in\bbR\)
        \Ensure Leverage score sample $\ell\in\bbR$
        \State Let \(\lambda_1,\ldots,\lambda_k\) be the result of running \cref{alg:eigval-sampler} with \(\alpha_1,\ldots,\alpha_k,\beta_1,\ldots,\beta_{k-1}\)
        \State Sample \(i\in\{1,\ldots,k\}\) uniformly at random
        \State Return \(\ell = \lambda_i\).
      \end{algorithmic}
    \end{algorithm}
  \vspace{-2em}
\end{figure}

So, we can generate each sample on line 3 of \cref{alg:debiased-generic} by sampling $\mY$ from an appropriate tridiagonal matrix model and returning a random eigenvalue of \mY, taking \(\cO(nd \log d)\) time to generate \(n\) samples from the leverage score distribution.
Note that other works on active polynomial regression \emph{approximately} sample from the leverage score distribution, which is faster to generate and achieves essentially the same asymptotic guarantees \cite{chen2019active,shustin2022semi,avron2019universal,meyer2023near}.
However, these approximations may inflate sample complexities by constant factors.
So, in order to optimize our empirical performance, we always sample from the \emph{exact leverage score distributions} in this paper.
We summarize our overall strategy with \cref{alg:eigval-sampler,alg:lev-sampler}.

\subsection{Random matrix selection}
We now outline a variety of distributions \(\mu\) and their corresponding tridiagonal matrix models \mY.
In this section, we always take \(\mX\in\bbC^{k \times k}\) to be drawn from the \(\mu\)-unitary ensemble.
We also take \(\mX\) to be a \(k \times k\) matrix to avoid any confusion or off-by-one errors; in \cref{alg:debiased-generic} we always take \(k=d+1\).
We then describe a tridiagonal matrix model \mY whose eigenvalues are distributed identically to those of \mX, with appropriate citations.
These methods allow \cref{alg:eigval-sampler,alg:lev-sampler} to run in \(\cO(k \log k)\) time.
We also mention which family of orthogonal polynomials is associated with each distribution, as these polynomials are needed to evaluate the leverage function on line 5 of \cref{alg:debiased-generic}.

We focus on the two most common distributions \(\mu\): the Gaussian distribution and the uniform distribution.
We note that other distributions, such as the Chi-squared \cite[Sec.~3.2]{dumitriu2002matrix} and Beta\footnote{
    This tridiagonalization result is originally proven by Killip and Nenciu \cite{killip2004matrix} who use a nonstandard definition of the Beta distribution.
    Hence, we refer the reader to \cite{duy2018spectral} for a more didactic reference.
} \cite[Sec.~5]{duy2018spectral} distributions have explicit tridiagonal matrix models which also lead to efficient sampling algorithms.
Some distributions, like the complex Gaussian distribution, do not have known tridiagonal matrix models but do admit direct construction of the matrix \mX \cite{byun2025progress}, allowing \cref{alg:eigval-sampler,alg:lev-sampler} to run in \(\cO(k^3)\) time.

\paragraph{Gaussian Distribution:~}
The setting where \(\mu=\cN(0,1)\) is the standard Gaussian distribution arises in a variety of applications \cite{tang2014discrete,hampton2015coherence,adcock2024learning,schwab2021deep,guo2020constructing} and yields the simplest tridiagonal construction.
In this case, the \(\mu\)-unitary ensemble has density \(q_{\mu}(\mX) \propto \prod_{i=1}^{d+1} e^{-|\lambda_i(\mX)|^2} = e^{-\norm\mX_{\rm F}^2}\), which is commonly referred to as the \emph{Gaussian Unitary Ensemble} (GUE) \cite{dumitriu2002matrix}.
For such matrices, Dumitriu and Edelman \cite{dumitriu2002matrix} show that building \mY in \cref{eqn:tridiag} with
\[
    \alpha_i \sim \cN(0,1)
    \quad
    \text{and}
    \quad
    \beta_i \sim \frac1{\sqrt 2} \chi_{2i},
\]
where \(\chi_k\) is the Chi distribution with parameter \(k\), ensures that the eigenvalues of \mY are distributed as those of \mX.
Further, the (normalized) \emph{probabilist's Hermite polynomials} form the orthonormal polynomials for \(\mu\).

\paragraph{Uniform Distribution:}
The setting where \(\mu\) is the uniform distribution on \([-1,1]\) is the most common situation that arises in practice, as it makes \cref{prob:bayesian} equivalent to solving the problem
\[
    \norm{\hat p-f}_{[-1,1]}^2
    \leq (1+\eps)
    \min_{\deg(p)\leq d} \norm{p-f}_{[-1,1]}^2,
\]
where \(\norm{f}_{[-1,1]}^2 = \int_{-1}^1 |f(t)|^2dt\), which has been explicitly studied in many prior works \cite{meyer2023near,kane2017robust,avron2019universal,shimizu24,adcock2018compressed,adcock2023fast}.
The definition of the \(\mu\)-unitary ensemble shows that \mX is distributed uniformly at random from the set of Hermitian matrices with eigenvalues in \([-1,1]\).
Dumitriu and Edelman \cite{dumitriu2002matrix} refer to the distribution of the matrix \(\frac12(\mX+\mI)\) as a Jacobi ensemble.
Killip and Nenciu \cite{killip2004matrix} (or \cite[Sec.~5]{duy2018spectral}; see footnote 2) constructed a tridiagonal matrix model for the Jacobi ensemble.
So, we can recover the tridiagonal matrix model for \mX by applying the affine transformation \(\mY \mapsto 2\mY-\mI\) to Killip and Nenciu's construction.
The matrix $\mY$ is defined in terms of independent random variables \(p_1,\ldots,p_{2k-1}\) with
\[
    p_i \sim
    \begin{cases}
        \operatorname{Beta}\bigl(
            \tfrac{2k-i}{2},\,\tfrac{2k-i+2}{2}
        \bigr) & i\text{ even} \\[6pt]
        \operatorname{Beta}\bigl(
            \tfrac{2k-i+1}{2},\,\tfrac{2k-i+1}{2}
        \bigr) & i\text{ odd}
    \end{cases}
\]
where \(p_{-1}=p_{0}=0\).
Using the format of \cref{eqn:tridiag}, the entries of \mY are then
\begin{align*}
    \alpha_i &=
    2\left(p_{2i-2}\bigl(1-p_{2i-3}\bigr)+p_{2i-1}\bigl(1-p_{2i-2}\bigr)\right)-1, \\
\beta_i &=
    2\left(\sqrt{\,p_{2i-1}\bigl(1-p_{2i-2}\bigr)p_{2i}\bigl(1-p_{2i-1}\bigr)\,}\right)-1,
\end{align*}
which ensures that the eigenvalues of \mY are distributed as those of \mX.
Further, the (normalized) \emph{Legendre polynomials} form the orthonormal polynomials for \(\mu\).

\section{Proof of \cref{thm:debiased-generic}: From regression to random matrices}\label{sec:thm_2_proof}

\Cref{thm:debiased-generic} says that we can solve \cref{prob:bayesian} by evaluating \(f\) at samples from the leverage score distribution for \(\mu\) and from the eigenvalues of a matrix drawn from the $\mu$-unitary ensemble.
This algorithm is a special case of \emph{leveraged volume sampling}, an algorithm proposed by Derezinksi, Warmuth, and Hsu \cite{derezinski2022unbiased} for general active linear regression problems.
So, to relate our problem to their setting, we first reformulate \cref{prob:bayesian} as a linear regression problem.

Let \(\rP_0,\ldots,\rP_d\) be the orthogonal polynomials with respect to \(\mu\), so that \(\rP_i\) has degree-\(i\) and \(\E_{t\sim\mu}[\rP_i(t)\rP_j(t)]=\mathbbm1_{[i=j]}\).
Then define the random vector
\[
    \va_t \defeq [\rP_0(t) ~ \cdots ~ \rP_d(t)]^\top\in\bbR^{d+1}
    \quad
    \text{where }
    t \sim \mu.
\]
Every degree-\(d\) polynomial \(p\) can be represented with a vector \(\vx\in\bbR^{d+1}\) by using the linear combination \(p(t) = \sum_{i=0}^d x_i \rP_i(t)\).
So, we can rewrite
\[
    \min_{\deg(p) \leq d} \E_{t \sim \mu} \bigl[|p(t) - f(t)|^2\big]
    =
    \min_{\vx\in\bbR^{d+1}} \E_{t \sim \mu} \bigl[|\va_t^\top\vx - f(t)|^2\bigr],
\]
which now has the appearance of a linear regression problem.

Derezinski \etal \cite{derezinski2022unbiased} define the \emph{leverage function}, \emph{leverage score distribution}, and \emph{projection DPP} for general linear regression problems.
Specializing these definitions to our polynomial regression setting, and simplifying with the observation that that \(\E[\va_t\va_t^\top]=\mI\), these definitions become:

\begin{definition}[Leverage function and distribution]
    \label{def:levs}
    The \unline{\emph{leverage function}} for measure \(\mu\) is \(\tau(t) \defeq \sum_{i=0}^d |\rP_i(t)|^2\).
    A variable \(t\) is drawn from the corresponding \unline{\emph{leverage score distribution}} if it has pdf \(q_{\rm lev}(t) \defeq \frac1{d+1} \mu(t)\tau(t)\).
\end{definition}

\begin{definition}[Projection DPP]
    \label{def:dpp}
    The \unline{\emph{projection DPP}} for measure a \(\mu\) is the distribution over sets \(\cS = \{t_1,\ldots,t_{d+1}\}\) with pdf \(q(\cS) = |\det(\mA_{\cS})|^2 \prod_{i=1}^{d+1}\mu(t_i)\) where \(\mA_{\cS} \defeq [\va_{t_1} ~ \cdots ~ \va_{t_{d+1}}]\in\bbR^{d+1 \times d+1}\).
\end{definition}

Note that Derezinki \etal's definitions recover the definitions of the leverage function and distribution we gave in \cref{sec:intro}.
Given these definitions, Derezinksi \etal \cite{derezinski2022unbiased} show the following:
\begin{importedtheorem}[Leveraged volume sampling solve linear regression; Thm.~3.1 from \cite{derezinski2022unbiased}]
    \label{impthm:leveraged-dpp}
    Fix degree \(d\), sample complexity \(n\geq d+1\), and error tolerance \(\eps > 0\).
    Sample \(\{t_1,\ldots,t_{d+1}\}\) from the projection DPP for \(\mu\).
    Sample \(t_{d+2},\ldots,t_n\) iid from the leverage score distribution for \(\mu\).
    Then the polynomial
    \[
        \hat p \defeq \argmin_{\deg(p) \leq d} \sum_{i=1}^n \frac{1}{\tau(t_i)}|p(t_i) - f(t_i)|^2
    \]
    has \(\E[\hat p] = p^*\).
    Further, taking \(n = \cO(d \log(d) + \frac d\eps)\) samples suffices to achieve
    \[
        \E_{\hat p}\left[\E_{t\sim\mu}\bigl[|\hat p(t) - f(t)|^2\bigr]\right]
        \leq (1+\eps)\E_{t\sim\mu}\bigl[|p^*(t) - f(t)|^2\bigr].
    \]
\end{importedtheorem}

The only mismatch between the statement of \cref{impthm:leveraged-dpp} and the statement of our main result \cref{thm:debiased-generic} lies in the construction of the DPP samples.
We use the well-known fact that the eigenvalues of certain random Hermitian matrices are distributed exactly as a projection DPP.
Formally, we invoke the following implication of Weyl's integration formula:
\begin{importedtheorem}[Weyl's integration formula, \protect{\cite[Eqn.~5.30]{livan2018introduction}}]
    \label{impthm:weyl-integral}
    Fix
pdf \(\mu:\bbR\to\bbR\).
    Let \(\mX\in\bbC^{k \times k}\) be drawn from the \(\mu\)-unitary ensemble.
    Then, the joint pdf of the eigenvalues of \mX is \(q(\lambda_1,\ldots,\lambda_k) \propto \prod_{i < j} |\lambda_i-\lambda_j|^2\prod_{i=1}^k \mu(\lambda_i) \).
\end{importedtheorem}

To prove \cref{thm:debiased-generic} using \cref{impthm:leveraged-dpp}, we now just show that the eigenvalue pdf in \cref{impthm:weyl-integral} matches the pdf of the projection DPP in \cref{def:dpp}:

\begin{lemma}[Polynomial regression projection DPPs are random matrix eigenvalues]
    \label{lem:proj-dpps-are-eigs}
    Let \(\mX\in\bbC^{d+1 \times d+1}\) be the random Hermitian matrix with pdf \(q_{\mu}(\mX) \propto \prod_{i=1}^{d+1} \mu(\lambda_i(\mX))\).
    Then, the eigenvalues of \mX are distributed as the projection DPP for \(\mu\).
    \label{lem:dpp_to_eigs}
\end{lemma}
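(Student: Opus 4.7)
The plan is to match the joint pdf of the eigenvalues of \(\mX\) (given by Weyl's integration formula in \cref{impthm:weyl-integral}) with the pdf of the projection DPP (\cref{def:dpp}). Since both distributions are supported on unordered multisets of \(d+1\) points in \(\bbR\), matching them amounts to showing the two densities are equal up to normalization.

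First, I would invoke \cref{impthm:weyl-integral} with \(k = d+1\) to conclude that the eigenvalues \(\lambda_1,\ldots,\lambda_{d+1}\) of \(\mX\) have joint density proportional to
\[
    \prod_{i<j}(\lambda_i - \lambda_j)^2 \,\prod_{i=1}^{d+1} \mu(\lambda_i).
\]
Next, I would rewrite the DPP density \(|\det(\mA_\cS)|^2\prod_i\mu(t_i)\) in Vandermonde form. Since \(\rP_i\) has degree exactly \(i\), there exists a lower-triangular matrix \(\mC \in \bbR^{(d+1)\times(d+1)}\) with diagonal entries \(c_{00},\ldots,c_{dd}\) (the leading coefficients of the orthonormal polynomials) such that \(\rP_i(t) = \sum_{k=0}^i C_{ik}t^k\). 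Letting \(\mV\) be the \((d+1)\times(d+1)\) Vandermonde matrix with \(V_{kj} = t_j^k\), we have \(\mA_\cS = \mC\mV\), so
\[
    |\det(\mA_\cS)|^2
    = \Bigl(\prod_{i=0}^d c_{ii}\Bigr)^{\!2}\,\prod_{i<j}(t_i - t_j)^2.
\]
The factor \(\prod_i c_{ii}^2\) is a constant independent of the points \(t_1,\ldots,t_{d+1}\), so it is absorbed into the normalization constant of the DPP density.

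Comparing the two densities, both are proportional to \(\prod_{i<j}(t_i - t_j)^2\prod_i\mu(t_i)\), hence they coincide on unordered tuples. I expect the main subtlety — rather than a real obstacle — to be bookkeeping between ordered eigenvalue tuples and the unordered set \(\cS\): Weyl's formula naturally produces a symmetric density that can be interpreted on either ordered tuples or on multisets (with a factor of \((d+1)!\) between the two conventions), and the DPP is defined on sets. Since the ratio of normalizations absorbs both the \(\prod_i c_{ii}^2\) prefactor and any such combinatorial constant, the two distributions agree, completing the proof together with \cref{impthm:leveraged-dpp}.
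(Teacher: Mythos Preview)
Your argument is correct and essentially identical to the paper's: both compute \(\det(\mA_\cS)\) via the generalized Vandermonde factorization (you spell out the lower-triangular change-of-basis matrix explicitly, the paper just cites the formula), obtain \(|\det(\mA_\cS)|^2 \propto \prod_{i<j}(t_i-t_j)^2\), and then match this against the eigenvalue density from \cref{impthm:weyl-integral}. One small cleanup: the final appeal to \cref{impthm:leveraged-dpp} is extraneous here, since the lemma only asserts equality of the two distributions and does not require the regression guarantee.
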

\begin{proof}
    Let \(\alpha_i\) be the leading coefficient of polynomial \(\rP_i\), and let \(\alpha \defeq \prod_i \alpha_i\).
    Then, by the generalized Vandermonde determinant formula \cite[Eqn.~7.3]{livan2018introduction}, we have that
    \[
        \det(\mA_{\cS})
        = \alpha \prod_{i > j} (t_i - t_j).
    \]
    Therefore, the pdf of the projection DPP is 
    \begin{align*}
        q(\cS)
        = |\det(\mA_{\cS})|^2 \prod_{i=1}^{d+1}\mu(t_i)
        = \alpha^2 \prod_{i > j} |t_i - t_j|^2 \prod_{i=1}^{d+1}\mu(t_i).
    \end{align*}
    Invoking \Cref{impthm:weyl-integral} shows that the joint pdf of the eigenvalues of \mX is
    \[
        q(\{\lambda_1,\ldots,\lambda_{d+1}\})
        \propto \prod_{i < j} |\lambda_i-\lambda_j|^2 \prod_{i=1}^{d+1}\mu(\lambda_i),
    \]
    which exactly matches the pdf of \cS, completing the proof.
\end{proof}

\section{Low-Degree Fourier Coefficients of Periodic Functions}
\label{sec:fourier}

A variant of \cref{prob:bayesian} discusses approximating periodic functions using a Fourier series.
Recall that any periodic function \(g\) (say, with period \(2\pi\)) can be represented as a function defined on the unit circle in the complex plane \(\cC = \{z : |z|=1\}\) with the map \(f(e^{i\theta}) \defeq g(\theta)\).
For a function $f$ on $\cC$, the Fourier series expansion of $f$ is given by $f(e^{i\theta}) = \sum_{k=-\infty}^{\infty} c_k e^{ik\theta}$, or in other words $f(z) = \sum_{k=-\infty}^{\infty} c_k z^k$.
Further, \(\{z \mapsto z^k/\sqrt{2\pi} : k\in\bbZ\}\) is an orthonormal basis for \(\cL_2(\cC)\), the set of functions on \cC with finite norm \( \norm f_{\cC}^2 \defeq \int_{\cC} |f(z)|^2 dz\).
Therefore, recalling that \(c_k\) is the \(k\)-th Fourier coefficient of \(f\),
\[
    p^* = \argmin_{\deg(p) \leq d} \norm{p(z) - f(z)}_{\cC}^2 = 
    \sum_{k=0}^d c_k z^k.
\]
That is, the best degree-\(d\) polynomial approximation of \(f\) recovers the degree-\(d\) Fourier series expansion of \(f\).
This motivates the following problem statement:

\begin{problem}
    \label{prob:fourier}
    Fix a degree-\(d\), error tolerance \(\eps > 0\).
    Let \(f:\cC\to\bbC\) be a function we have oracle access to.
    Using as few evaluations \(f(z_1),\ldots,f(z_n)\) as possible, construct a polynomial \(\hat p\) such that
    \[
        \norm{\hat p - f}_{\cC}^2
        \leq (1+\eps) \norm{p^* - f}_{\cC}^2.
    \]
\end{problem}

By a similar analysis to that of \cref{thm:debiased-generic}, we show that \cref{alg:debiased-fourier} solves \cref{prob:fourier}:
\begin{theorem}
    \label{thm:debiased-fourier}
    Let \(\hat p\) be the result of running \cref{alg:debiased-fourier}. Then for any $n \ge d+1$, \(\hat p\) has \(\E[\hat p] = p^*\).
    In addition, taking \(n = \cO(d \log(d) + \frac d\eps)\) samples suffices for $\hat p$ to solve \cref{prob:fourier} in expectation:
    \[
        \E_{\hat p}\left[\norm{\hat p - f}_{\cC}^2\right]
        \leq (1+\eps)\norm{p^* - f}_{\cC}^2
    \]
\end{theorem}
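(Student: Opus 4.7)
The plan is to mirror the proof of \cref{thm:debiased-generic}: recast \cref{prob:fourier} as a complex linear regression problem, apply \cref{impthm:leveraged-dpp} for unbiasedness and the $(1+\eps)$ error bound, and identify the requisite projection DPP with the joint eigenvalue distribution of a Haar-random unitary matrix---the Circular Unitary Ensemble (CUE). For the reformulation, let $\mu_\cC$ be the uniform probability measure on $\cC$ (i.e.\ $d\theta/(2\pi)$ in arc length). Then $\{z^k\}_{k=0}^d$ is orthonormal with respect to $\mu_\cC$, so defining $\va_z \defeq [1,\,z,\,\ldots,\,z^d]^\top \in \bbC^{d+1}$ gives $\E_{z \sim \mu_\cC}[\va_z \va_z^*] = \mI$. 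Every degree-$d$ polynomial $p(z) = \sum_{k=0}^d x_k z^k$ is encoded by a vector $\vx \in \bbC^{d+1}$ via $p(z) = \va_z^\top \vx$, so \cref{prob:fourier} reduces, up to the scalar factor relating arc length and probability, to the complex linear regression problem $\min_{\vx} \E_{z \sim \mu_\cC}[|\va_z^\top \vx - f(z)|^2]$, with $p^*$ recovered from the optimal $\vx^*$.

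Next, I compute the two ingredients needed to invoke the complex analog of \cref{impthm:leveraged-dpp}. Since $|z|=1$, the leverage function is constant: $\tau(z) = \|\va_z\|^2 = d+1$. Hence the leverage score distribution $q_{\rm lev}(z) = \tau(z)\mu_\cC(z)/(d+1)$ is simply $\mu_\cC$, so iid leverage score samples coincide with iid uniform samples on $\cC$. For the projection DPP, the matrix $\mA_{\cS} = [\va_{z_1}\,\cdots\,\va_{z_{d+1}}]$ is a Vandermonde matrix with $\det(\mA_{\cS}) = \prod_{i<j}(z_j - z_i)$, so the DPP density from the complexified analog of \cref{def:dpp} becomes
\[
    q(\cS) \propto |\det(\mA_{\cS})|^2 \prod_{i=1}^{d+1}\mu_\cC(z_i) \propto \prod_{i<j}|z_i - z_j|^2.
\]
This is precisely the classical Weyl joint density of the eigenvalues of a Haar-random $(d+1)\times(d+1)$ unitary matrix---the CUE analog of \cref{impthm:weyl-integral}. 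Consequently, the $d+1$ DPP samples needed by \cref{alg:debiased-fourier} can be realized as the eigenvalues of a single CUE draw, and the remaining $n-(d+1)$ samples are drawn iid uniformly on $\cC$. Feeding these into the reweighted least-squares estimator and applying the complex version of \cref{impthm:leveraged-dpp} yields $\E[\hat p] = p^*$ and the $(1+\eps)$ expected-error bound at $n = \cO(d \log d + d/\eps)$ samples.

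The main obstacle is not the computation---the Vandermonde evaluation and the CUE density are classical---but verifying that the leveraged volume sampling framework of Derezinski et al.\ carries over intact from real features and targets to complex ones. Their unbiasedness proof rests on determinantal identities and orthogonal projection arguments that are agnostic to the base field, so the extension essentially amounts to replacing transposes by conjugate transposes throughout; nonetheless, a careful writeup must confirm that every step of their argument (in particular the $\det$-based change of measure driving the unbiasedness calculation) survives this substitution. A secondary, more cosmetic issue is bookkeeping the $2\pi$ factor between the probability measure $\mu_\cC$ used in the regression reformulation and the arc-length norm $\norm{\cdot}_\cC$ appearing in \cref{prob:fourier}, which enters only as a global rescaling of the objective and hence does not affect either the bias or the relative error guarantee.
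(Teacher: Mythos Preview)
Your proposal is correct and follows essentially the same route as the paper's proof: reformulate as complex linear regression with $\va_z = [1,z,\ldots,z^d]^\top$, identify the projection DPP density with the CUE eigenvalue density via the Vandermonde determinant and \cref{impthm:weyl-integral-fourier}, observe that the leverage score distribution is uniform on $\cC$, and invoke \cref{impthm:leveraged-dpp}. The only cosmetic difference is that you compute $\tau(z)=d+1$ directly from $|z|=1$ whereas the paper cites \cite{zandieh2023near} for the same fact, and you are more explicit than the paper about the need to check that the Derezi\'nski--Warmuth--Hsu machinery extends to complex features (the paper simply asserts that it does ``without any alteration'').
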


Notice that \cref{alg:debiased-fourier} requires us to generate a uniformly (i.e. Haar) random unitary matrix \(\mX\in\bbC^{d+1 \times d+1}\) and compute its eigenvalues.
Naively, this takes \(\cO(d^3)\) time.
However, an algorithm of Fasi and Robol \cite{fasi2021sampling} shows that we can generate the eigenvalues of \mX directly in \(\cO(d^2)\) time.
An interesting open problem is to generate these eigenvalues in \(\widetilde \cO(d)\) time.
We also remark that other methods, such as querying \(f\) at points \(z_j = e^{\theta_0 + 2\pi j / n}\) for \(\theta_0 \in [0,2\pi]\) drawn uniformly at random, ensures that \(\E[\hat p] = p^*\).
Investigating the error \(\eps\) of this estimator in terms of \(n\) is an interesting open problem.

The proof of \cref{thm:debiased-fourier} benefits from the following classic result of Dyson, which serves as an analogue to \cref{impthm:weyl-integral} that considers eigenvalues on the complex circle \cC instead of the real line \bbR.
\begin{importedtheorem}
[\protect{\cite{dyson1970correlations}}, see also \protect{\cite[Thm.~3.1]{meckes2019random}}]
    \label{impthm:weyl-integral-fourier}
    Let \(\mu\) be the uniform distribution on \cC.
    Let \(\mX\in\bbC^{k \times k}\) be a uniformly (i.e. Haar) random unitary matrix.
    Then, the joint pdf of the eigenvalues of \mX is \(q(\lambda_1,\ldots,\lambda_k) \propto \prod_{i < j} |\lambda_i-\lambda_j|^2\).
\end{importedtheorem}

\begin{figure}
    \centering
    \includegraphics[width=.85\linewidth]{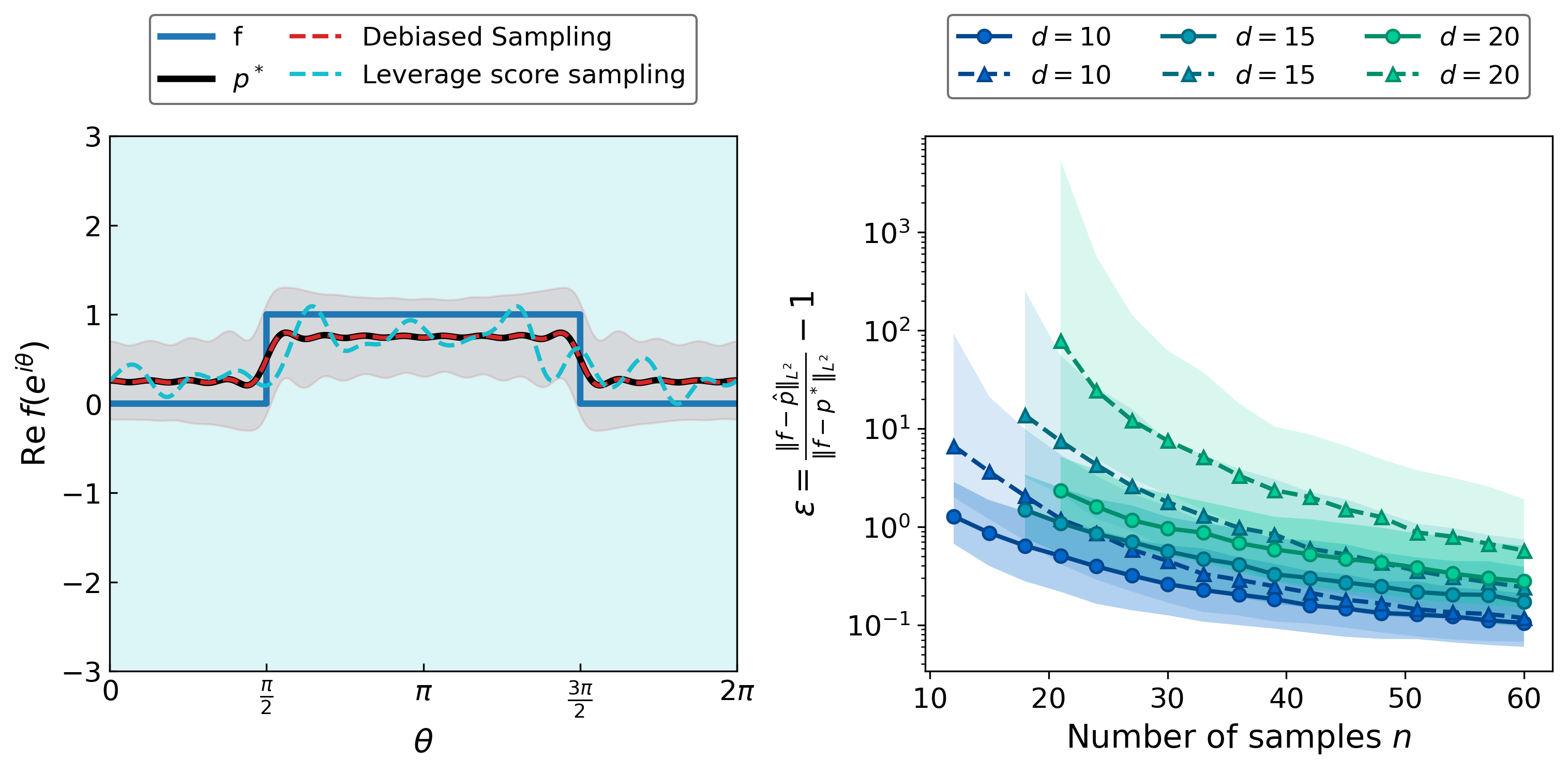}
    \caption{\textbf{Debiased Fourier regression.}
    We run \cref{alg:debiased-fourier} and leverage score sampling (i.e. uniform sampling) to find a Fourier series approximation of the function \(f(e^{i\theta}) = \mathbbm1_{\theta\in[3\pi/4, 5\pi/4]}\).\\
    \textit{Left:}
        We fix \(d=15\) with \(n=25\), and plot the resulting Fourier series \(\hat p\) for \cref{alg:debiased-fourier} (red) and leverage score sampling (blue), alongside the best-fit Fourier series \(p^*\) (black).
        Dotted lines show the average of returned series across 100,000 trials.
        Shaded areas show \(\pm 1\) standard deviation.
        Our method exactly recovers the best-fit series on average, while leverage score sampling does not appear to.
        Further, we have much lower variance than leverage score sampling.
        It is possible that, with sufficiently many trials, the dotted blue line may overlap with the black one.\\
    \textit{Right:}
        For number of Fourier coefficients \(d\in\{10,20,30\}\), we show how the relative error \(\eps_{empirical}\) (see \cref{sec:experiments}) of \cref{alg:debiased-generic} (circles \(\circ\)) and leverage score sampling (triangles \(\triangle\)) varies with the sample complexity \(n\).
        Solid lines show median error, with the shaded region showing the \(10\%\) and \(90\%\) quantiles across 1,000 trials.
        For small \(n\), say \(n \leq 3d\), we have an especially large improvement in relative error.
        As \(n\) grows larger, the difference between the methods dissipates.
}
    \label{fig:fourier}
\end{figure}
\begin{algorithm}[t]
  \caption{Debiased active Fourier regression} 
  \begin{algorithmic}[1] 
    \Require Oracle access to $f$, target degree-$d$, sample complexity \(n \geq d+1\)
    \Ensure Degree-$d$ polynomial $\hat p$
    \State Sample uniformly (i.e. Haar) random unitary matrix $\mX \in\bbC^{d+1 \times d+1}$ 
    \State Compute eigenvalues \(z_1,\ldots,z_{d+1}\in\cC\) of \mX
    \State Sample $z_{d+2}, \dots, z_{n} \in \cC$ iid uniformly from $\bbC$
    \State Evaluate \(f(z_i)\) for all \(i\in[n]\)
    \State Return polynomial \(\hat p = \argmin_{\deg(p) \leq d} \sum_{i=1}^{n}|p(z_i)-f(z_i)|^2\)
  \end{algorithmic}
  \label{alg:debiased-fourier}
\end{algorithm}

\begin{proof}[Proof of \cref{thm:debiased-fourier}]

As in the proof of \cref{thm:debiased-generic}, we reformulate \cref{prob:fourier} as a linear regression problem:
\[
    \min_{\deg(p) \leq d} \norm{p - f}_{\cC}^2
    =
    \min_{\vx\in\bbC^{d+1}}
    \E_{z \sim \mu}
    \bigl[|\va_{z}^*\vx - f(z)|^2\bigr]
\]
where \(\mu\) is the uniform distribution over \cC and $\va_{z} = [1 ~ z ~ z^2 ~ \cdots ~ z^d]^\top\in\bbC^{d+1}$.
The definitions of the leverage function, distribution, and projection DDP (\cref{def:levs,def:dpp}), as well as  \cref{impthm:leveraged-dpp} all extend to this complex setting without any alteration.
To complete the proof, we need to show that the eigenvalues computed on line 2 of \cref{alg:debiased-fourier} are indeed the projection DPP for \(\mu\) and that the uniformly random points sampled on line 3 of \cref{alg:debiased-fourier} are indeed sampled from the leverage score distribution for \(\mu\).

First, recalling that \(\mu\) is the uniform distribution on \cC, we find that the projection DPP for a set of points \(\cS = \{z_1, \ldots, z_{d+1}\}\) has density
\(q(\cS) \propto |\det(\mA_\cS)|^2\) where \(\mA_{\cS} = [\va_{z_1} ~ \cdots ~ \va_{z_{d+1}}]\in\bbC^{d+1 \times d+1}\).
Expanding this density via the generalized Vandermonde determinant formula \cite[Eqn.~7.3]{livan2018introduction}, we get
\(
    q(\cS) \propto \prod_{i < j} |z_i - z_j|^2.
\)
Since this pdf exactly matches the density in \cref{impthm:weyl-integral-fourier}, we conclude that the eigenvalues of a uniformly random unitary matrix form a projection DPP, and hence that the eigenvalues computed on line 2 of \cref{alg:debiased-fourier} are correct.

Further, Zandieh \etal \cite[Lem.~5]{zandieh2023near} show that the leverage function for the uniform distribution on \cC is constant on \cC.
Therefore, we find that the leverage score distribution for \(\mu\) on \cC is the uniform distribution on \cC, and hence that the leverage score samples generated on line 3 of \cref{alg:debiased-fourier} are correct.
\end{proof}

\section{Experimental Results}
\label{sec:experiments}

In this section we examine the empirical performance of \cref{alg:debiased-generic} under the uniform and Gaussian distributions and highlight several practical implementation choices that lead to a fast and numerically stable solver.
We first validate our claim that \cref{thm:debiased-generic} is unbiased.
We then demonstrate that in the low sample complexity regime (e.g.\ $n \leq 3d$), debiased sampling achieves significantly lower approximation error than leverage score sampling.
All experiments were run in Python v3.12 on the UC Berkeley SCF computing cluster, using 64 cores of a dual-socket AMD EPYC 7543 (2.8 GHz).

\subsection{Numerical demonstrations of \cref{thm:debiased-generic}}
In \cref{fig:unbiased} we empirically realize \cref{thm:debiased-generic} by studying polynomial approximation of the indicator function $$f(t)=\mathbbm{1}_{t\in[a,b]} \quad \text{with } a,b\in \mathbb{R},$$
where we pick \(a,b\) based on the distribution \(\mu\).
We choose the indicator function because it is known to be a hard instance for \cref{prob:bayesian} \cite{meyer2023near}.

In \cref{fig:unbiased}, we aggregate 100,000 independent trials to plot the mean degree-$15$ polynomial $\hat p$ produced by \cref{alg:debiased-generic} and the corresponding leverage score estimate, both shown alongside the optimal polynomial $p^*$ computed using standard quadrature methods.
In both examples, $\hat p$ produced by \cref{alg:debiased-generic} aligns with $p^*$ whereas the approximation produced by leverage score sampling struggles to match \(p^*\), or in the case of the uniform measure, fails entirely with the provided number of samples. 

\begin{figure}[t]
    \centering
    \includegraphics[width=1\linewidth]{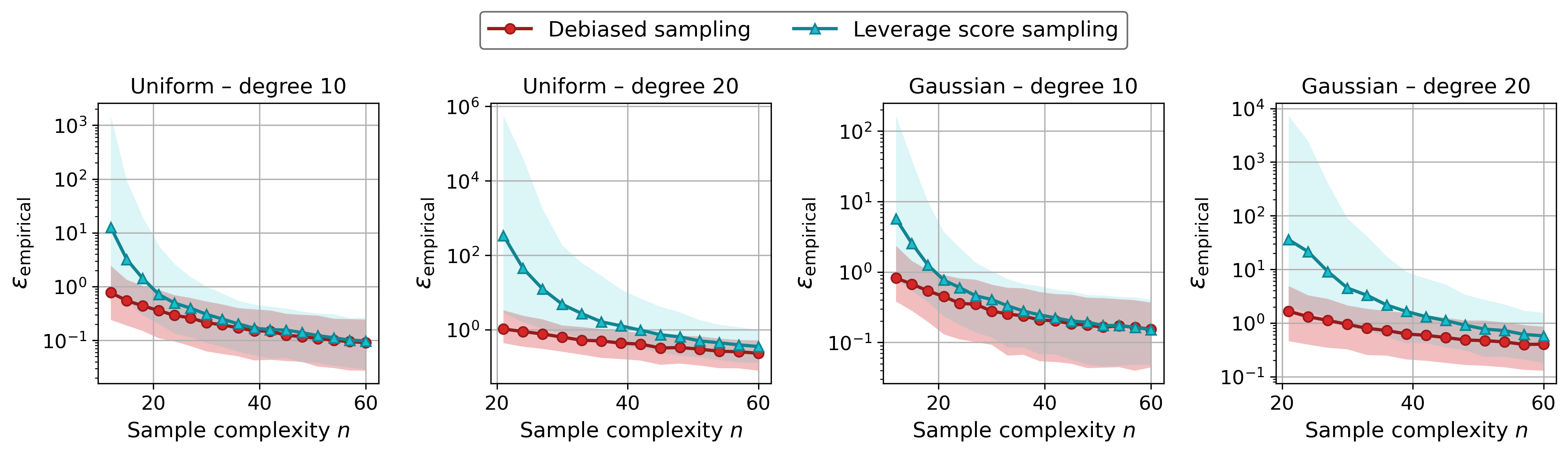}
    \caption{
    We run our debiased method \cref{alg:debiased-generic} (red) and leverage score sampling (blue) to find polynomial approximations of indicator functions under different distributions \(\mu\), across a variety of sample complexities \(n\), repeated for 1,000 trials.
    The first and third plots use degree \(d=10\), while the second and fourth plots use degree \(d=30\).    
We report the relative error \(\eps_{empirical}\) (see \cref{sec:experiments}).
    \textit{First two plots:}
    When \(\mu\) is the uniform distribution on \([-1,1]\), we take \(f(t) = \mathbbm1_{t\in[-0.5,0.5]}\).
    \textit{Last two plots:}
    When \(\mu\) is the Gaussian distribution \(\cN(0,1)\), we take \(f(t) = \mathbbm1_{t\in[-1,1]}\).
    Our method outperforms leverage score sampling in all regimes, and has an especially large edge when we have low sample complexity (say, \(n \leq 3d\)).
    Shaded areas show \(10\%\) and \(90\%\) quantiles.
    }
    \label{fig:errors}
\end{figure}

This fact is further demonstrated in \cref{fig:errors}, where we calculate the \textit{empirical relative error }
$$\varepsilon_{\rm empirical}=\frac{\E_{t\sim\mu}\bigl[|\hat p(t) - f(t)|^2\bigr]}{\E_{t\sim\mu}\bigl[|p^*(t) - f(t)|^2\bigr]}-1,$$
acquired by rearranging \cref{prob:bayesian}.
We investigate polynomial approximations of the same indicator functions using degree $d=10,15,20$ approximations, and study the behavior of this metric as we increase the number of samples for each method.
In the low sample complexity setting  (say, \(n \leq 3d\)), debiased sampling produces median relative errors that are $10-17\times$ more accurate than leverage score sampling using the same number of samples.
Notably, as \(n\) grows, all but a vanishing fraction of the samples used by \cref{alg:debiased-generic} become iid samples from the leverage score distribution.
So, in the large \(n\) regime, we see the performance of leverage score sampling and \cref{alg:debiased-generic} coincide.

Next, recall \cref{prob:fourier}, which extends the debiased sampling problem to approximating periodic functions via truncated Fourier series.
In \cref{sec:fourier} we show that this problem leads to \cref{alg:debiased-fourier} which we numerically investigate in \cref{fig:fourier}.
In this experiment we consider the complex analog of the previous experiment, constructing a periodic indicator function of the form
$$
f(e^{i\theta}) = \mathbbm1_{\theta\in[\pi/2,3\pi/2]}\quad \text{for } \theta \in [0,2\pi]$$
defined on the complex unit circle $\cC$.
In this setting as well, debiased sampling produces an accurate approximation $\hat p$, requiring less samples than iid leverage score sampling. 

\subsection{Implementation details}
To implement our methods effectively, we emphasize two design choices that lead to a fast and numerically stable version of \cref{alg:debiased-generic}.

First, to ensure that our eigenvalue sampler \cref{alg:eigval-sampler} runs in \(\cO(d \log d)\) time, it is essential to use a \emph{symmetric tridiagonal eigenvalue solver}, such as the \texttt{scipy.linalg.eigvalsh\_tridiagonal} subroutine in \texttt{scipy}.
Second, we must solve the polynomial approximation \(\hat p\) from line 5 of \cref{alg:debiased-generic} in a numerically stable way.
To do so, we expand \(\hat p(t) = \sum_{i=0}^d x_{i+1}\rP_i(t)\) as a linear combination of polynomials \(\rP_0,\ldots,\rP_d\) and solve for the coefficient vector \(\vx\in\bbR^{d+1}\):
\[
    \min_{\deg(p)\leq d} \sum_{i=1}^n \frac{1}{\tau(t_i)}\bigl| p(t_i) - f(t_i) \bigr|^2
    = \min_{\vx\in\bbR^{d+1}} \norm{\mS(\mV\vx-\vb)}_2^2
\]
where \(v_{ij}=\rP_j(t_i)\) is a generalized Vandermonde matrix, \(s_{ii} = 1/\sqrt{\tau(t_i)}\) is diagonal, and \(b_i = f(t_i)\).
This is most commonly done with the monomials \(\rP_j(t)=t^j\), though this is known to be unstable for large $n$ \cite{gautschi1987lower,gautschi2020stable}.
Our implementation instead takes \(\rP_0,\ldots,\rP_d\) to be the orthonormal polynomials for \(\mu\).

\pad
\noindent
Full implementations of the described algorithms in this work are available publicly at
\actionbox{\href{https://github.com/chriscamano/Debiased-Polynomial-Regression}{https://github.com/chriscamano/Debiased-Polynomial-Regression}}

\section{Acknowledgements}
Chris Camaño was supported by the Kortschak scholars program, and the National Science Foundation Graduate Research Fellowship under Grant No. 2139433.
Raphael Meyer was partially supported by a Caltech Center for Sensing to Intelligence grant to Joel A. Tropp and ONR Award N-00014-24-1-2223 to Joel A. Tropp.
We thank the Berkeley Statistical Computing Facility for access to computational resources.
We thank Ethan Epperly for detailed comments improving the presentation of the paper.
We also thank Jorge Garza-Vargas for helpful discussion.

\bibliographystyle{siamplain}

\begin{thebibliography}{10}

\bibitem{adcock2024optimal}
{\sc B.~Adcock}, {\em Optimal sampling for least-squares approximation}, arXiv
  preprint arXiv:2409.02342,  (2024),
  \url{https://doi.org/10.48550/arXiv.2409.02342}.

\bibitem{adcock2018compressed}
{\sc B.~Adcock, S.~Brugiapaglia, and C.~G. Webster}, {\em Compressed sensing
  approaches for polynomial approximation of high-dimensional functions}, in
  Compressed Sensing and its Applications: Second International MATHEON
  Conference 2015, Springer, 2018, pp.~93--124,
  \url{https://doi.org/10.1007/978-3-319-69802-1_3}.

\bibitem{adcock2024learning}
{\sc B.~Adcock, M.~Griebel, and G.~Maier}, {\em Learning lipschitz operators
  with respect to gaussian measures with near-optimal sample complexity}, arXiv
  preprint arXiv:2410.23440,  (2024), \url{https://arxiv.org/pdf/2410.23440}.

\bibitem{adcock2023fast}
{\sc B.~Adcock and A.~Shadrin}, {\em Fast and stable approximation of analytic
  functions from equispaced samples via polynomial frames}, Constructive
  Approximation, 57 (2023), pp.~257--294,
  \url{https://doi.org/10.1007/s00365-022-09593-2}.

\bibitem{avron2019universal}
{\sc H.~Avron, M.~Kapralov, C.~Musco, C.~Musco, A.~Velingker, and A.~Zandieh},
  {\em A universal sampling method for reconstructing signals with simple
  fourier transforms}, in Proceedings of the 51st Annual ACM SIGACT Symposium
  on Theory of Computing, 2019, pp.~1051--1063,
  \url{https://doi.org/10.1145/3313276.3316363}.

\bibitem{brubeck2021vandermonde}
{\sc P.~D. Brubeck, Y.~Nakatsukasa, and L.~N. Trefethen}, {\em Vandermonde with
  arnoldi}, Siam Review, 63 (2021), pp.~405--415,
  \url{https://doi.org/10.1137/19M130100X}.

\bibitem{byun2025progress}
{\sc S.-S. Byun and P.~J. Forrester}, {\em Progress on the Study of the Ginibre
  Ensembles}, Springer Nature, 2025,
  \url{https://doi.org/10.1007/978-981-97-5173-0}.

\bibitem{chen2019active}
{\sc X.~Chen and E.~Price}, {\em Active regression via linear-sample
  sparsification}, in Conference on Learning Theory, PMLR, 2019, pp.~663--695,
  \url{https://proceedings.mlr.press/v99/chen19a.html}.

\bibitem{coakley13}
{\sc E.~S. Coakley and V.~Rokhlin}, {\em A fast divide-and-conquer algorithm
  for computing the spectra of real symmetric tridiagonal matrices}, Applied
  and Computational Harmonic Analysis, 34 (2013), pp.~379--414,
  \url{https://doi.org/10.1016/j.acha.2012.06.003}.

\bibitem{derezinski2021determinantal}
{\sc M.~Derezinski and M.~W. Mahoney}, {\em Determinantal point processes in
  randomized numerical linear algebra}, Notices of the American Mathematical
  Society, 68 (2021), pp.~34--45,
  \url{https://doi.org/10.48550/arXiv.2005.03185}.

\bibitem{derezinski2022unbiased}
{\sc M.~Derezi{\'n}ski, M.~K. Warmuth, and D.~Hsu}, {\em Unbiased estimators
  for random design regression}, Journal of Machine Learning Research, 23
  (2022), pp.~1--46, \url{https://jmlr.org/papers/v23/19-571.html}.

\bibitem{devroye23}
{\sc L.~Devroye and J.~Hamdan}, {\em A note on the exact simulation of a random
  eigenvalue of a gue matrix}, arXiv e-prints,  (2023), pp.~arXiv--2304,
  \url{https://doi.org/10.48550/arXiv.2304.03741}.

\bibitem{dumitriu2002matrix}
{\sc I.~Dumitriu and A.~Edelman}, {\em Matrix models for beta ensembles},
  Journal of Mathematical Physics, 43 (2002), pp.~5830--5847,
  \url{https://doi.org/10.1063/1.1507823}.

\bibitem{duy2018spectral}
{\sc T.~K. Duy}, {\em On spectral measures of random jacobi matrices}, Osaka
  Journal of Mathematics, 55 (2018), pp.~595--617,
  \url{https://doi.org/10.18910/70813}.

\bibitem{dyson1970correlations}
{\sc F.~J. Dyson}, {\em Correlations between eigenvalues of a random matrix},
  Communications in Mathematical Physics, 19 (1970), pp.~235--250,
  \url{https://doi.org/10.1007/BF01646824}.

\bibitem{erdelyi2020}
{\sc T.~Erd{\'e}lyi, C.~Musco, and C.~Musco}, {\em Fourier sparse leverage
  scores and approximate kernel learning}, Advances in Neural Information
  Processing Systems, 33 (2020), pp.~109--122,
  \url{https://papers.nips.cc/paper/2020/hash/012d9fe15b2493f21902cd55603382ec-Abstract.html}.

\bibitem{fasi2021sampling}
{\sc M.~Fasi and L.~Robol}, {\em Sampling the eigenvalues of random orthogonal
  and unitary matrices}, Linear Algebra and its Applications, 620 (2021),
  pp.~297--321, \url{https://doi.org/10.1016/j.laa.2021.02.031}.

\bibitem{gautschi2020stable}
{\sc W.~Gautschi}, {\em How (un) stable are vandermonde systems?}, in
  Asymptotic and computational analysis, CRC Press, 2020, pp.~193--210,
  \url{https://www.cs.purdue.edu/homes/wxg/selected_works/section_01/118.pdf}.

\bibitem{gautschi1987lower}
{\sc W.~Gautschi and G.~Inglese}, {\em Lower bounds for the condition number of
  vandermonde matrices}, Numerische Mathematik, 52 (1987), pp.~241--250,
  \url{https://doi.org/10.1007/BF01398878}.

\bibitem{guo2020constructing}
{\sc L.~Guo, A.~Narayan, and T.~Zhou}, {\em Constructing least-squares
  polynomial approximations}, SIAM Review, 62 (2020), pp.~483--508,
  \url{https://doi.org/10.1137/18M1234151}.

\bibitem{hampton2015coherence}
{\sc J.~Hampton and A.~Doostan}, {\em Coherence motivated sampling and
  convergence analysis of least squares polynomial chaos regression}, Computer
  Methods in Applied Mechanics and Engineering, 290 (2015), pp.~73--97,
  \url{https://doi.org/10.1016/j.cma.2015.02.006}.

\bibitem{kane2017robust}
{\sc D.~Kane, S.~Karmalkar, and E.~Price}, {\em Robust polynomial regression up
  to the information theoretic limit}, in 2017 IEEE 58th Annual Symposium on
  Foundations of Computer Science, IEEE, 2017, pp.~391--402,
  \url{https://doi.org/https://doi.org/10.1109/FOCS.2017.43}.

\bibitem{killip2004matrix}
{\sc R.~Killip and I.~Nenciu}, {\em Matrix models for circular ensembles},
  International Mathematics Research Notices, 2004 (2004), pp.~2665--2701,
  \url{https://doi.org/10.1155/S1073792804141597}.

\bibitem{livan2018introduction}
{\sc G.~Livan, M.~Novaes, and P.~Vivo}, {\em Introduction to random matrices
  theory and practice}, Monograph Award, 63 (2018), p.~914,
  \url{https://doi.org/https://doi.org/10.1007/978-3-319-70885-0}.

\bibitem{meckes2019random}
{\sc E.~S. Meckes}, {\em The random matrix theory of the classical compact
  groups}, vol.~218, Cambridge University Press, 2019,
  \url{https://doi.org/10.1017/9781108303453}.

\bibitem{meyer2023near}
{\sc R.~A. Meyer, C.~Musco, C.~Musco, D.~P. Woodruff, and S.~Zhou}, {\em
  Near-linear sample complexity for {Lp} polynomial regression}, in Proceedings
  of the 2023 Annual ACM-SIAM Symposium on Discrete Algorithms, SIAM, 2023,
  pp.~3959--4025, \url{https://doi.org/10.1137/1.9781611977554.ch153}.

\bibitem{musco2022active}
{\sc C.~Musco, C.~Musco, D.~P. Woodruff, and T.~Yasuda}, {\em Active linear
  regression for $\ell_p$ norms and beyond}, in 2022 IEEE 63rd Annual Symposium
  on Foundations of Computer Science, IEEE, 2022, pp.~744--753,
  \url{https://doi.org/10.1109/FOCS54457.2022.00076}.

\bibitem{pauwels2018relating}
{\sc E.~Pauwels, F.~Bach, and J.-P. Vert}, {\em Relating leverage scores and
  density using regularized christoffel functions}, in Advances in Neural
  Information Processing Systems, S.~Bengio, H.~Wallach, H.~Larochelle,
  K.~Grauman, N.~Cesa-Bianchi, and R.~Garnett, eds., vol.~31, Curran
  Associates, Inc., 2018,
  \url{https://proceedings.neurips.cc/paper_files/paper/2018/file/aff1621254f7c1be92f64550478c56e6-Paper.pdf}.

\bibitem{schwab2021deep}
{\sc C.~Schwab and J.~Zech}, {\em Deep learning in high dimension: Neural
  network approximation of analytic functions in
  ${L}^2(\mathbb{R}^d,\gamma_d)$}, arXiv preprint arXiv:2111.07080,  (2021),
  \url{https://arxiv.org/pdf/2111.07080}.

\bibitem{shimizu24}
{\sc A.~Shimizu, X.~Cheng, C.~Musco, and J.~Weare}, {\em Improved active
  learning via dependent leverage score sampling}, in The Twelfth International
  Conference on Learning Representations, 2024,
  \url{https://openreview.net/forum?id=IYxDy2jDFL}.

\bibitem{shustin2022semi}
{\sc P.~F. Shustin and H.~Avron}, {\em Semi-infinite linear regression and its
  applications}, SIAM Journal on Matrix Analysis and Applications, 43 (2022),
  pp.~479--511, \url{https://doi.org/10.1137/21M1411950}.

\bibitem{tang2014discrete}
{\sc T.~Tang and T.~Zhou}, {\em On discrete least-squares projection in
  unbounded domain with random evaluations and its application to parametric
  uncertainty quantification}, SIAM Journal on Scientific Computing, 36 (2014),
  pp.~A2272--A2295, \url{https://doi.org/https://doi.org/10.1137/140961894}.

\bibitem{trefethen2019approximation}
{\sc L.~N. Trefethen}, {\em Approximation theory and approximation practice,
  extended edition}, SIAM, 2019, \url{https://doi.org/10.1137/1.9781611975949}.

\bibitem{zandieh2023near}
{\sc A.~Zandieh, I.~Han, and H.~Avron}, {\em Near optimal reconstruction of
  spherical harmonic expansions}, Advances in Neural Information Processing
  Systems, 36 (2023), pp.~23968--23989,
  \url{https://proceedings.neurips.cc/paper_files/paper/2023/hash/4b719e74623f4fa238ded71b56f0a184-Abstract-Conference.html}.

\end{thebibliography}
{\footnotesize

 }

\end{document}